\newcommand{\kb}[1]{\mathbf{\left[#1\right]}}
\newcommand{\PM}{{M}}
\newcommand{\PN}{{N}}
\newcommand{\trd}[1]{\left|\left| #1 \right| \right|}
\newcommand{\st}{\text{ } | \text{ }}
\newcommand{\Hmin}{H_\infty}
\newcommand{\Hextd}{\bar{H}}
\newcommand{\hw}{w}
\newcommand{\tdl}{\left|\left|}
\newcommand{\tdr}{\right|\right|}
\begin{document}
\title{Quantum Sampling and Entropic Uncertainty}
\author{
Walter O. Krawec
}
\institute{W.O. Krawec \at University of Connecticut\\Department of Computer Science and Engineering\\Storrs CT 06269 USA\\\email{walter.krawec@uconn.edu}}

\maketitle

\begin{abstract}
In this paper, we show an interesting connection between a quantum sampling technique and quantum uncertainty.  Namely, we use the quantum sampling technique, introduced by Bouman and Fehr, to derive a novel entropic uncertainty relation based on smooth min entropy, the binary Shannon entropy of an observed outcome, and the probability of failure of a classical sampling strategy.  We then show two applications of our new relation.  First, we use it to develop a simple proof of a version of the Maassen and Uffink uncertainty relation.  Second, we show how it may be applied to quantum random number generation.
\end{abstract}

\section{Introduction}

In this paper, we revisit a famous entropic uncertainty relation proven by Maassen and Uffink in \cite{MU-bound} (which followed a conjecture by Kraus in \cite{kraus-uncertainty} and was also an improvement of an entropic uncertainty relation first proposed by Deutsch \cite{deutsch-first-bound}).  Given a quantum system $\rho$ and two projective measurements (PMs) $\{M_x\}$ and $\{N_x\}$ (where $M_x = \ket{\mu_x}\bra{\mu_x}$ and $N_y = \ket{\nu_y}\bra{\nu_y}$ for some orthonormal bases $\{\ket{\mu_x}\}$ and $\{\ket{\nu_y}\}$), then one cannot necessarily be certain of the outcome of both measurements.  More specifically, the relation states:
\begin{equation}\label{eq:uncertain}
H(M) + H(N) \ge -\log_2 c,
\end{equation}
where $c$ is a function of the two measurements, namely:
\begin{equation}\label{eq:uncertain-c}
c = \max_{x,y}|\braket{\mu_x|\nu_y}|^2.
\end{equation}
This relation, and numerous others like it (\cite{ent2,ent3,ent4,ent1} just to list a very few), are not only interesting in and of themselves, but also have numerous other applications throughout quantum information science and quantum cryptography.  For a general survey of entropic uncertainty relations, the reader is referred to \cite{survey,survey-2,survey-3}.

In this paper, using a quantum sampling technique introduced in \cite{sampling}, we derive a novel entropic uncertainty relation based on smooth \emph{quantum} min entropy with a direct connection to \emph{classical} sampling strategies.  We use this to derive a novel, and in our opinion simpler, proof of Equation \ref{eq:uncertain} for projective measurements over two-dimensional systems.  We also show how our new bound can be applied to cryptographic applications.  To our knowledge, this sampling technique has not seen application to more broad areas of quantum information before our paper. 

Our new entropic uncertain bound utilizes smooth min entropy and has a direct connection to sampling strategies.  It is also applicable for states which are not necessarily i.i.d.; that is, our result is applicable to arbitrary states and we do \emph{not} need to assume the given state is i.i.d.  This is very useful for cryptographic applications as non i.i.d. states arise when an adversary has the ability to perform an arbitrary general attack on a quantum state; thus the ability for our bound to handle such arbitrary systems means it can be used to prove security for some protocols against general coherent attacks.  This new relation, informally, states that, except with small probability (determined by the user and the dimension of the system), measuring a portion of a system in one basis resulting in outcome $q$ implies the smooth min-entropy in the remaining portion, after measuring in a second basis, can be lower-bounded by the binary Shannon entropy of the Hamming weight of $q$ and the maximal overlap of the two basis measurements, up to some error induced by the sampling technique.  This new relation, which to our knowledge has not been discovered before, may hold interesting applications in quantum cryptography as we discuss later. Furthermore, the techniques we used to derive and prove this new relation may be useful in further extending the quantum sampling technique to other application domains.

There are several contributions in this work.  First, we discover a novel entropic uncertainty bound (involving smooth min entropy and applicable to arbitrary, non-i.i.d. states) directly related to sampling strategies and which may have interesting applications to quantum cryptography and information theory.  We show a rather interesting connection between quantum sampling and quantum uncertainty and use this to derive a much simpler proof of a particular case of Equation \ref{eq:uncertain}.  We also discuss how our methods can be used to analyze certain cryptographic protocols, in particular, quantum random number generators.  Finally, the techniques we use in this paper may find application to other areas of quantum information science and may eventually lead to better bounds for quantum cryptography in the finite key setting.

\subsection{Notation and Definitions}
Let $\mathcal{A}$ be a finite alphabet of size $d$.  Then if $q \in \mathcal{A}^n$ and $\tau = \{\tau_1,\cdots, \tau_k\} \subset \{1,\cdots,n\}$, we write $q_\tau$ to mean the sub-string of $q$ indexed by $\tau$, namely $q_\tau = (q_{\tau_1}, \cdots, q_{\tau_k})$.  We write $q_{-\tau}$ to mean the sub-string of $q$ indexed by the complement of $\tau$.

If $\mathcal{A} = \{0,1\}$, the \emph{Hamming weight} of the string $q$ is defined to be the number of non-zero elements in $q$.  For arbitrary $\mathcal{A}$ and for any $a \in \mathcal{A}$, we define the \emph{relative $a$-Hamming weight}, which we denote by $\hw_a(q)$, to be the number of letters in $q$ \emph{not equal to $a$} and that quantity divided by the length of $q$.  Namely: $\hw_a(q) = {|\{i \st q_i \ne a\}|}/{|q|}$,
where $|q|$ denotes the length of the string $q$.

A \emph{density operator} acting on Hilbert space $\mathcal{H}$ is a Hermitian positive semi-definite operator of unit trace.  Given $\ket{\psi} \in \mathcal{H}$ we write $\kb{\psi}$ to mean $\ket{\psi}\bra{\psi}$.  We define a \emph{Projective Measurement} or PM over a $d$-dimensional Hilbert space $\mathcal{H}$ to be a set of projectors $N = \{\kb{\phi_1}, \cdots, \kb{\phi_{d}}\}$, where $\{\ket{\phi_i}\}_{i=1}^d$ form an orthonormal basis of $\mathcal{H}$.  It is not difficult to see that we may treat a measurement outcome of $\ket{\phi_{j_1}}\otimes\cdots\otimes\ket{\phi_{j_n}}$ as the classical string $j = j_1\cdots j_n$.  We often write $\mathcal{H}_d$ to mean a $d$-dimensional Hilbert space.

We denote $H(X)$ to be the Shannon entropy of random variable $X$.  If $\rho$ is a density operator acting on Hilbert space $\mathcal{H}$ and if $N$ is a PM over $\mathcal{H}$, we write $H(N)_\rho$ to mean the Shannon entropy of the random variable induced by measuring $\rho$ using PM $N$.  Similarly, if $\ket{\psi}$ is a pure state in $\mathcal{H}$ we write $H(N)_\psi$ to mean the entropy of the result of measuring $\kb{\psi}$ using PM $N$.  For technical reasons later, we define an \emph{extended binary entropy function}, denoted $\Hextd(x)$ which is defined to be $H(x, 1-x)$ if $x \in [0,1/2]$; otherwise, if $x < 0$, $\Hextd(x) = 0$ and if $x > 1/2$, then $\Hextd(x) = 1$.

Given a density operator $\rho_{AE}$, acting on some Hilbert space $\mathcal{H}_A\otimes\mathcal{H}_B$, the conditional quantum min entropy \cite{renner2005security}, denoted $\Hmin(A|E)_\rho$, is defined to be:
\begin{equation}\label{eq:min-ent}
\Hmin(A|E)_\rho = \sup_{\sigma_E}\max\{\lambda\in\mathbb{R} \st 2^{-\lambda}I_A\otimes\sigma_E - \rho_{AE} \ge 0\}.
\end{equation}
Here, $I_A$ is the identity operator on $\mathcal{H}_A$ and the notation $X \ge 0$, for some operator $X$, implies that $X$ is positive semi-definite.

To attempt to gain some insight into what, exactly, the above definition means, first consider the case where the $E$ system is trivial.  In this case we may write $\Hmin(A)_\rho$ and it holds that:
\[
\Hmin(A)_\rho = -\log\lambda_{\max}(\rho),
\]
where $\lambda_{\max}(\rho)$ is the maximal eigenvalue of $\rho$ (note that all logarithms in this paper are base $2$ unless otherwise stated).  For classical states, this has a very clear meaning.  Let $\rho_A = \sum_ip_i\kb{i}$ for some orthonormal basis $\{\ket{i}\}$.  Then $\Hmin(A)_\rho$ is simply $-\log\max_ip_i$.  A comparison to von Neumann entropy for the two dimensional case is shown in Figure \ref{fig:comp}.

\begin{figure}
  \centering
  \includegraphics[width=250pt]{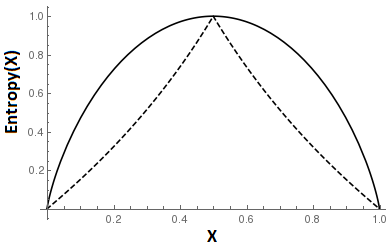}
  \caption{Comparing Shannon entropy (solid) with min entropy of a classical state (dashed) in the two-dimensional case.}\label{fig:comp}
\end{figure}

The more general, conditional min-entropy is more difficult to understand conceptually using only Equation \ref{eq:min-ent}.  Instead, it is more intuitive to think of min entropy in terms of guessing probabilities (\emph{at least, for classical-quantum (cq) states}).  If we have a cq-state of the form $\rho_{AE} = \sum_ip_i\kb{i}\otimes\rho_E^{(i)}$, then it was shown in \cite{min-ent-op} that:
\[
\Hmin(A|E)_\rho = -\log P_{\text{guess}}(\rho_{AE}),
\]
where:
\[
P_{\text{guess}}(\rho_{AE}) = \max_{\{\mathcal{M}_i\}}\sum_i p_itr(\mathcal{M}_i\rho_{E}^{(i)}),
\]
and the maximum is over all POVM operators on $\mathcal{H}_E$.
Thus, for cq-states at least, one can think of min-entropy in terms of ``guessing games.''  This will not be important to our discussion, however it helps to give a clearer picture of what, exactly, min-entropy is measuring.

Quantum min-entropy has many applications in quantum cryptography, especially in finite-key scenarios.  In particular, given a cq-state $\rho_{AE}$ (perhaps derived from some quantum cryptographic protocol), where the $A$ register is correlated with the $E$ register in some way.  One may apply \emph{privacy amplification} to attempt to establish a uniform random string independent of $E$'s quantum register.  Let $\sigma_{KE'}$ be the resulting cq-state after processing $\rho_{AE}$ through privacy amplification (essentially, \emph{publicly} choosing a random two-universal hash function, and applying it to the $A$ register).  The $K$ register is of size $\ell$ bits and the $E'$ register contains $E$'s original information plus the hash function used.  In \cite{renner2005security}, it was shown that:
\begin{equation}\label{eq:PA}
\tdl \sigma_{KE'} - I_K/2^\ell\otimes\sigma_{E'}\tdr \le 2^{-\frac{1}{2}(\Hmin(A|E)_\rho - \ell)}.
\end{equation}
Thus, deriving bounds on min-entropy is highly useful as they lead directly to bounds on how large a random string may be distilled from a given cq-state (they also may be used for quantum key distribution, though there one must also take into account the information leaked during error correction).  We will return to this in a later section.

For notation, if $N$ is a PM on $\mathcal{H}$ and $\rho$ is a density operator on $\mathcal{H}^{\otimes n}$, then we use $\Hmin(N)_\rho$ to mean the min entropy of the resulting state following the measurement of each of the $n$ sub-spaces $\rho$ acts on using PM $N$.  If $p(j)$ is the probability of observing outcome $j = j_1\cdots j_n$ (i.e., after measuring, one observes the quantum state $\ket{\phi_{j_1}}\otimes\cdots\otimes\ket{\phi_{j_n}}$) it is not difficult to see that:
$\Hmin(N)_\rho = -\log\max_jp(j).$

Given a density operator $\rho_{AC}$ acting on $\mathcal{H}_A\otimes\mathcal{H}_C$, where the $C$ portion is \emph{classical} (namely, we may write $\rho_{AC} = \sum_cp_c\sigma_{A}^{(c)}\otimes\kb{c}$, where $\{\ket{c}\}$ is an orthonormal basis of $\mathcal{H}_C$ and each $\sigma_{AB}^{(c)}$ is an arbitrary density operator acting on $\mathcal{H}_A$) then the conditional min entropy $\Hmin(A|C)_\rho$ is:
\begin{equation}\label{eq:min-ent-classical}
\Hmin(A|C)_\rho \ge \inf_{c}\Hmin(A)_{\sigma^{(c)}},
\end{equation}
The above can be proven from Lemma 3.1.8 in \cite{renner2005security} and the definition of conditional min entropy.

Finally, the \emph{$\epsilon$-smooth min entropy}, denoted $\Hmin^\epsilon(\rho)$ is defined to be:
\begin{equation}\label{eq:smooth-entropy}
\Hmin^\epsilon(\rho) = \sup_{\sigma \in \Gamma_\epsilon(\rho)}\Hmin(\sigma),
\end{equation}
where $\Gamma_\epsilon(\rho)$ is the set of all density operators $\epsilon$ close to $\rho$ as measured by the trace distance; i.e.,
\begin{equation}\label{eq:Gamma}
\Gamma_\epsilon(\rho) = \{\sigma \st \trd{\sigma - \rho} \le \epsilon\},
\end{equation}
and $\trd{A}$ is the \emph{trace distance} of $A$.  We define $\Hmin^\epsilon(N)_\rho$ similarly to $\Hmin(N)_\rho$ described earlier whenever $N$ is a PM.  The conditional smooth entropy, $\Hmin^\epsilon(A|B)$ is defined similarly.  Note that there is a version of privacy amplification (Equation \ref{eq:PA}) for smooth min entropy, proven in \cite{renner2005security}, which we will use later:
\begin{equation}\label{eq:PA-smooth}
\tdl \sigma_{KE'} - I_K/2^\ell\otimes\sigma_{E'}\tdr \le 2^{-\frac{1}{2}(\Hmin^\epsilon(A|E)_\rho - \ell)} + 2\epsilon.
\end{equation}

An important result, which we will use later, was proven in \cite{sampling} (based on a Lemma in \cite{renner2005security}) and allows one to compute the min entropy of a superposition of states:
\begin{lemma}\label{lemma:super-entropy}
(From \cite{sampling}): Let $\mathcal{H}$ be a $d$-dimensional Hilbert space with orthonormal basis $\{\ket{i}\}_{i=1}^d$ and let $\mathcal{H}_E$ be an arbitrary finite dimensional Hilbert space.  Then, for any pure state $\ket{\psi} = \sum_{i\in J}\alpha_i\ket{i}\otimes\ket{\phi_i}_E \in \mathcal{H}\otimes\mathcal{H}_E$, if we define:
\begin{align*}
\rho = \sum_{i\in J}|\alpha_i|^2\kb{i}\otimes\kb{\phi_i}_E,
\end{align*}
it holds that for any PM $N$ on $\mathcal{H}$:
\begin{equation}
\Hmin(N|E)_\psi \ge \Hmin(N|E)_\rho - \log_2|J|.
\end{equation}
\end{lemma}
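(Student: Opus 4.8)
The plan is to apply the semidefinite-program definition of conditional min entropy (Equation \ref{eq:min-ent}) directly, after observing that, by the conventions set above, both $\Hmin(N|E)_\psi$ and $\Hmin(N|E)_\rho$ are the conditional min entropies $\Hmin(X|E)$ of the classical--quantum states obtained by applying the measurement channel of $N$ to the first register, where $X$ carries the classical outcome. Writing $N=\{\kb{e_1},\dots,\kb{e_d}\}$, a short computation shows this cq-state is $\sigma^\psi_{XE}=\sum_j\kb{j}_X\otimes\kb{v_j}_E$ with $\ket{v_j}=\sum_{i\in J}\alpha_i\braket{e_j|i}\ket{\phi_i}_E$, whereas applying the same channel to the classical mixture $\rho$ acts on each branch $\kb{i}\otimes\kb{\phi_i}_E$ independently and yields $\sigma^\rho_{XE}=\sum_j\kb{j}_X\otimes\big(\sum_{i\in J}|\alpha_i|^2|\braket{e_j|i}|^2\kb{\phi_i}_E\big)$.

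The crux is the operator inequality $|J|\cdot\sigma^\rho_{XE}\ge\sigma^\psi_{XE}$; note that one cannot expect $|J|\rho\ge\kb{\psi}$ before the measurement, so the measurement channel is genuinely needed here. Since both sides are block diagonal with respect to the orthogonal projectors $\{\kb{j}_X\}$, it is equivalent to show, for each outcome $j$, that $|J|\sum_{i\in J}|c_i|^2\kb{\phi_i}\ge\kb{v}$, where I abbreviate $c_i=\alpha_i\braket{e_j|i}$ and $\ket{v}=\sum_{i\in J}c_i\ket{\phi_i}$. This is immediate from Cauchy--Schwarz applied to the $|J|$-term sum $\braket{v|w}=\sum_{i\in J}\overline{c_i}\braket{\phi_i|w}$: for every $\ket{w}$ we have $|\braket{v|w}|^2\le|J|\sum_{i\in J}|c_i|^2|\braket{\phi_i|w}|^2=\bra{w}\big(|J|\sum_{i\in J}|c_i|^2\kb{\phi_i}\big)\ket{w}$.

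To conclude, take any state $\sigma_E$ and any $\lambda\in\mathbb{R}$ with $2^{-\lambda}I_X\otimes\sigma_E-\sigma^\rho_{XE}\ge 0$; multiplying by $|J|$ and chaining with the inequality just proved gives $2^{-(\lambda-\log_2|J|)}I_X\otimes\sigma_E-\sigma^\psi_{XE}\ge 0$, so $(\sigma_E,\lambda-\log_2|J|)$ is feasible in Equation \ref{eq:min-ent} for $\sigma^\psi_{XE}$. Taking the supremum over all admissible $(\sigma_E,\lambda)$ — for which the supremum of $\lambda$ is exactly $\Hmin(X|E)_{\sigma^\rho}=\Hmin(N|E)_\rho$ — then yields $\Hmin(N|E)_\psi\ge\Hmin(N|E)_\rho-\log_2|J|$, as required. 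I expect the only genuinely delicate point to be the first step: correctly identifying the two post-measurement cq-states, in particular that measuring $\rho$ in the basis $N$ produces a classical mixture of the (not necessarily orthonormal, not necessarily distinct) states $\ket{\phi_i}$; once that bookkeeping is right, what remains is the two-line Cauchy--Schwarz estimate and routine manipulation of the min-entropy SDP.
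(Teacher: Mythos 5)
Your proof is correct, and it is essentially the standard argument from the cited literature. Note that the paper itself does not prove this lemma at all: it is imported from the Bouman--Fehr sampling paper (which in turn rests on a lemma in Renner's thesis), so there is no in-paper proof to compare against; your route --- establish the operator domination $\sigma^\psi_{XE}\le|J|\,\sigma^\rho_{XE}$ via Cauchy--Schwarz and then shift the feasible $\lambda$ in the min-entropy SDP of Equation \ref{eq:min-ent} by $\log_2|J|$ --- is precisely the known proof strategy. One aside in your write-up is wrong, though harmless to the argument: you claim one cannot expect $|J|\rho\ge\kb{\psi}$ \emph{before} the measurement. In fact this pre-measurement inequality does hold, by the very same Cauchy--Schwarz estimate you apply blockwise (for any $\ket{w}$, $|\braket{w|\psi}|^2=|\sum_{i\in J}\alpha_i\braket{w|\,i,\phi_i}|^2\le|J|\sum_{i\in J}|\alpha_i|^2|\braket{w|\,i,\phi_i}|^2=|J|\bra{w}\rho\ket{w}$), and it is the form in which Renner and Bouman--Fehr state the key step; applying the completely positive measurement channel of $N$ to both sides then yields your post-measurement blockwise inequality immediately. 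So the measurement is needed only to make sense of $\Hmin(N|E)$, not to make the operator inequality true; otherwise your identification of the two cq-states and the SDP bookkeeping are exactly right.
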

The above lemma will allow us to bound the min entropy of a \emph{superposition} of states, by computing, instead, the min entropy in a suitable \emph{mixed} state.

\section{Quantum Sampling}

Since our proof relies on the quantum sampling technique introduced in \cite{sampling}, we now review this subject here.  All information in this section is derived from \cite{sampling} (we make only a few changes in notation and some generality) and is meant only as a review of this material for completeness.

Let $\mathcal{A}$ be a finite alphabet of size $d$, and let $a \in \mathcal{A}$, and $k \in \mathbb{N}$.  We assume $d$, $a$, and $k$ are arbitrary, but fixed.  A sampling strategy is a pair $\Sigma = (P^k_T, \mathcal{F}_{a}^k)$ where $P_T^k$ is a distribution over all subsets of $\{1,\cdots,n\}$ of size $k$ and $\mathcal{F}_a^k$ is a function which, given a subset of a sample $q \in \mathcal{A}^n$ (i.e., given $q_\tau$), will output a guess of the value $\hw_a(q_{-\tau})$.  That is, given a randomly chosen sample $q_\tau$ (where $\tau$ was drawn according to $P_T^k$), $\mathcal{F}_a^k$ will estimate the value of $\hw_a$ in \emph{the remaining portion} of $q$.  When it is clear, we will often forgo writing the superscript, and simply write $\mathcal{F}_a$.

Define $B_{\tau,a}^\delta(\Sigma)$ to be the set of all words in $\mathcal{A}^n$ such that the estimate provided by $\mathcal{F}_a$ is $\delta$ close to the actual value given a fixed subset $\tau \subset\{1,\cdots,n\}$ of size $k$.  That is, let:
\[
B_{\tau,a}^\delta(\Sigma) = \{q\in\mathcal{A}^n \st |\mathcal{F}_{a}(q_\tau) - \hw_a(q_{-\tau})| \le \delta\}.
\]
Informally, if we have a fixed subset $\tau$ with $|\tau| = k$, then the set $B_{\tau,a}^\delta(\Sigma)$ defines the set of all ``good'' strings; i.e., strings for which the sampling strategy $\Sigma$ provides an accurate estimate of $\hw_a$, up to an error of $\delta$ assuming $\tau$ was the chosen subset.

From this, the \emph{error probability of $\Sigma$} is defined to be:
\begin{equation}
\epsilon_\delta^{cl} = \max_{q\in\mathcal{A}^n}Pr(q \not\in B_{T, a}^\delta(\Sigma)).
\end{equation}
where the probability is over all subsets $\tau$ chosen according to $P_T^k$ (i.e., we treat $B_{T,a}^\delta$ as a random variable induced by choosing subsets $\tau$ according to $P_T^k$).  From this definition, it is clear that for any word $q \in \mathcal{A}^n$, the estimated value of $\hw_a$, given by the sampling strategy $\Sigma$, is $\delta$ close to the real value in the remainder of the string (i.e., in the portion of the string that was not used in the test set $\tau$), except with probability $\epsilon_\delta^{cl}$.  Note the superscript ``$cl$'' is used to show this is the error probability of a \emph{classical} sampling strategy.

One important sampling strategy we will make use of is the following: Let $P_T^k$ be the uniform distribution over all subsets $\tau \subset \{1,\cdots,N\}$ with $|\tau| = k$; i.e., $Pr(P_T^k = \tau) = 1/{N\choose k}$.  Then, given a string $q \in \mathcal{A}^N$, the function $\mathcal{F}$ is defined simply to be: $\mathcal{F}_a(q_\tau) = \hw_a(q_\tau)$.  That is, the sampling strategy is to choose a random subset, uniformly at random, evaluate $\hw_a$ on that subset, and output, as an estimate of the value $\hw_a(q_{-\tau})$, the value $\hw_a(q_\tau)$.  The following Lemma was proven in \cite{sampling} (see Appendix B in the extended, online version, of that reference):
\begin{lemma}\label{lemma:sample}
(From \cite{sampling}): Let $\delta > 0$ be given and $\Sigma$ be as described above in the text.  If $|\tau|=k\le N/2$ then for any $d$ and $a$, it holds that:
\begin{equation}\label{eq:lemma-sample-error}
\epsilon_\delta^{cl} \le 2\exp\left(-\frac{\delta^2kN}{N+2}\right).
\end{equation}
\end{lemma}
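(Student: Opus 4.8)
The plan is to fix an arbitrary word $q \in \mathcal{A}^N$ and bound, over the random choice of $\tau$, the probability that $q \notin B_{\tau,a}^\delta(\Sigma)$; since the bound I obtain will not depend on $q$, taking the maximum over $q$ is immediate and gives $\epsilon_\delta^{cl}$. Write $m = |\{i \st q_i \ne a\}|$ and $\mu = m/N = \hw_a(q)$. When $\tau$ is drawn uniformly among the size-$k$ subsets of $\{1,\dots,N\}$ and $X = |\{i \in \tau \st q_i \ne a\}|$ is the number of ``non-$a$'' coordinates caught by the sample, then $\mathcal{F}_a(q_\tau) = \hw_a(q_\tau) = X/k$, the variable $X$ is hypergeometric, and $\mathbb{E}[X/k] = \mu$.

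The first real step is an algebraic identity that rewrites the quantity controlling membership in $B_{\tau,a}^\delta(\Sigma)$ as a deviation of the sample mean from $\mu$. Since $\hw_a(q_{-\tau}) = (m-X)/(N-k)$, a short computation gives
\begin{equation*}
\mathcal{F}_a(q_\tau) - \hw_a(q_{-\tau}) = \frac{X}{k} - \frac{m-X}{N-k} = \frac{N}{N-k}\left(\frac{X}{k} - \mu\right).
\end{equation*}
Hence $q \in B_{\tau,a}^\delta(\Sigma)$ holds exactly when $|X/k - \mu| \le \delta(N-k)/N$, and it remains to bound the probability of the complementary event.

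For that I would invoke a Serfling-type Hoeffding inequality for sampling without replacement from a $\{0,1\}$-valued population: for all $s > 0$,
\begin{equation*}
Pr\left(\left|\frac{X}{k} - \mu\right| \ge s\right) \le 2\exp\left(-\frac{2ks^2 N}{N-k+1}\right).
\end{equation*}
Substituting $s = \delta(N-k)/N$ turns the right-hand side into $2\exp\left(-\frac{2\delta^2 k (N-k)^2}{N(N-k+1)}\right)$, so the lemma reduces to the elementary inequality $\frac{2(N-k)^2}{N-k+1} \ge \frac{N^2}{N+2}$ whenever $k \le N/2$. Setting $j = N-k \ge N/2$, the left side equals $2j^2/(j+1)$, which is increasing in $j$, hence is at least its value at $j = N/2$, which is precisely $N^2/(N+2)$. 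Chaining the displays yields $\epsilon_\delta^{cl} \le 2\exp\left(-\delta^2 k N/(N+2)\right)$.

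I do not expect a genuine obstacle here; the one subtlety is that the plain Hoeffding bound $2\exp(-2ks^2)$ is too weak once the factor $(N-k)/N$ from the identity above is absorbed, so one must use the sharper without-replacement bound whose denominator $N-k+1$ is exactly what collapses (together with the hypothesis $k \le N/2$, which is in fact sharp at $k=N/2$) to the stated $N+2$. A minor secondary check is that $X$ really is hypergeometric with mean $k\mu$ under the uniform $P_T^k$, and that $N-k \ge 1$ so that $\hw_a(q_{-\tau})$ is well defined — both following from $k \le N/2$ with $N \ge 2$.
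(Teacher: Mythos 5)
Your proof is correct: the algebraic identity reducing $\mathcal{F}_a(q_\tau)-\hw_a(q_{-\tau})$ to $\frac{N}{N-k}\left(\frac{X}{k}-\mu\right)$ with $X$ hypergeometric, the Serfling-type bound for sampling without replacement, and the monotonicity argument showing $2j^2/(j+1)\ge N^2/(N+2)$ for $j=N-k\ge N/2$ (with equality at $k=N/2$, which explains the constant $N+2$) all check out, and taking the maximum over $q$ is indeed immediate since the bound is $q$-independent. Note that the paper itself offers no proof of this lemma but defers to Appendix B of Bouman and Fehr \cite{sampling}; your derivation is essentially the same route taken there, so this is a faithful, self-contained reconstruction rather than a new argument.
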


These notions can be extended to the quantum domain \cite{sampling}.  Consider an orthonormal basis $\{\ket{a} \st a \in \mathcal{A}\}$ and let $\mathcal{H}_A$ be the $d$-dimensional Hilbert space spanned by this basis.  Let $U$ be a unitary operator acting on $\mathcal{H}_A$.  Then, we may define an orthonormal basis:
\[
\mathcal{B} = \{U^{\otimes n}\ket{b_1\cdots b_n} = U\ket{b_1}\otimes\cdots\otimes U\ket{b_n} \st b_i\in\mathcal{A}\},
\]
of the Hilbert space $\mathcal{H}_A^{\otimes n}$.  Then, given a state $\ket{\psi} \in \mathcal{H}_A^{\otimes n}\otimes \mathcal{H}_E$, it is said to have relative $a$-Hamming weight $\beta$ \emph{in $A$} \emph{with respect to basis $\mathcal{B}$}, if we can write $\ket{\psi} = U^{\otimes n}\ket{b_1\cdots b_n}\otimes\ket{\phi}_E$ with $\hw_a(b) = \beta$.  Note that we are allowed an additional, arbitrary, system in some Hilbert space $\mathcal{H}_E$ (this may be the trivial space if it is not needed).  Also, notice that this definition is dependent on the choice of basis.

By abusing notation slightly, we may also define span$(B_{\tau,a}^\delta)$ to be:
\[
\text{span}\left(\{U^{\otimes n}\ket{q} \st q \in \mathcal{A}^n \text{ and } |w_a(q_\tau) - w_a(q_{-\tau})| \le \delta\}\right)
\]
Note that if $\ket{\psi} \in \text{span}(B_{t,a}^\delta)\otimes \mathcal{H}_E$ then, if sampling is done by measuring in the $\mathcal{B}$ basis on subset $\tau$, it is guaranteed that the state collapses to a superposition of states which are $\delta$ close to the observed $a$-Hamming weight (with respect to basis $\mathcal{B}$).  Also note we will drop the $\delta$ superscript when the context is clear.

Using the above definitions, the main result from \cite{sampling} is as follows:
\begin{theorem}\label{thm:sample}
(From \cite{sampling}, though reworded for our application in this paper and our specific sampling strategy): Let $k \le n/2$ be given and consider sampling strategy $\Sigma$ as described above.  Then, for every pure state $\ket{\psi} \in \mathcal{H}_A^{\otimes n}\otimes\mathcal{H}_E$, there exists a collection of ``ideal states'' $\{\ket{\phi^\tau}\}$ where the index is over all subsets $\tau$ of size $k$ and each $\ket{\phi^\tau} \in \text{span }(B_{\tau,a}^\delta)\otimes\mathcal{H}_E$ such that:
\[
\left|\left| \frac{1}{T}\sum_{\tau}\kb{\tau}\otimes\kb{\psi} - \frac{1}{T}\sum_\tau\kb{\tau}\otimes\kb{\phi^\tau}\right|\right| \le \sqrt{\epsilon_\delta^{cl}},
\]
where $T = {n \choose k}$ and the sum is over all subsets of size $k$. Note that we prepend an auxiliary system spanned by orthonormal basis $\{\ket{\tau}\}$ for all appropriate subsets $\tau$.
\end{theorem}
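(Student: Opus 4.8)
The plan is to reduce the quantum statement to the worst-case classical error $\epsilon_\delta^{cl}$ by an explicit projection argument. First I would remove the unitary $U$: setting $V:=U^{\otimes n}\otimes I_E$, trace distance is unitarily invariant and, by the paper's definition of $\text{span}(B_{\tau,a}^\delta)$, we have $\text{span}(B_{\tau,a}^\delta)\otimes\mathcal{H}_E=V\,\mathcal{G}_\tau$ with $\mathcal{G}_\tau:=\text{span}\{\ket{q}\st q\in B_{\tau,a}^\delta\}\otimes\mathcal{H}_E$; hence it suffices to prove the claim for $\ket{\psi'}:=V^\dagger\ket{\psi}$ relative to the ``good'' subspaces $\mathcal{G}_\tau$ (the $U=I$ case), and then set $\ket{\phi^\tau}:=V\ket{\tilde\phi^\tau}$. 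For our strategy $\mathcal{F}_a(q_\tau)=\hw_a(q_\tau)$ one has $B_{\tau,a}^\delta=\{q\st|\hw_a(q_\tau)-\hw_a(q_{-\tau})|\le\delta\}$, so the orthogonal projector onto $\mathcal{G}_\tau$ is simply $\Pi_\tau=\sum_{q\in B_{\tau,a}^\delta}\kb{q}\otimes I_E$.

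Second, I would take the ideal states to be normalized projections: $\ket{\tilde\phi^\tau}:=\Pi_\tau\ket{\psi'}/\|\Pi_\tau\ket{\psi'}\|$ when $\Pi_\tau\ket{\psi'}\ne0$, and any unit vector of $\mathcal{G}_\tau$ otherwise (this subspace is nonempty, e.g.\ it contains $\ket{a\cdots a}$). In both cases $|\braket{\psi'|\tilde\phi^\tau}|^2=\braket{\psi'|\Pi_\tau|\psi'}$, so the standard identity for the trace distance of two pure states gives $\trd{\kb{\psi'}-\kb{\tilde\phi^\tau}}=\sqrt{1-\braket{\psi'|\Pi_\tau|\psi'}}=\|(I-\Pi_\tau)\ket{\psi'}\|$.

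Third, I would assemble the bound. Because the registers $\ket{\tau}$ are mutually orthogonal, the difference operator appearing in the theorem is block diagonal in $\tau$, so (up to the normalization convention, trace distance of a block-diagonal operator adds) its trace distance equals $\frac1T\sum_\tau\trd{\kb{\psi'}-\kb{\tilde\phi^\tau}}=\frac1T\sum_\tau\|(I-\Pi_\tau)\ket{\psi'}\|$, which by concavity of the square root is at most $\sqrt{\tfrac1T\sum_\tau\|(I-\Pi_\tau)\ket{\psi'}\|^2}$. It therefore remains to bound $\frac1T\sum_\tau\|(I-\Pi_\tau)\ket{\psi'}\|^2$ by $\epsilon_\delta^{cl}$. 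Writing $\ket{\psi'}=\sum_{q\in\mathcal{A}^n}\ket{q}\otimes\ket{\mu_q}_E$ with $\sum_q\|\ket{\mu_q}\|^2=1$, we get $\|(I-\Pi_\tau)\ket{\psi'}\|^2=\sum_{q\notin B_{\tau,a}^\delta}\|\ket{\mu_q}\|^2$; averaging over $\tau$ drawn from the uniform distribution $P_T^k$ and exchanging the order of summation yields $\frac1T\sum_\tau\|(I-\Pi_\tau)\ket{\psi'}\|^2=\sum_q\|\ket{\mu_q}\|^2\cdot\Pr_{\tau\sim P_T^k}(q\notin B_{\tau,a}^\delta)$. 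Each of these probabilities is at most $\epsilon_\delta^{cl}=\max_{q}\Pr(q\notin B_{T,a}^\delta)$ by definition, so the whole expression is at most $\epsilon_\delta^{cl}\sum_q\|\ket{\mu_q}\|^2=\epsilon_\delta^{cl}$, giving the claimed bound $\sqrt{\epsilon_\delta^{cl}}$.

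The only genuinely substantive step is the averaging identity in the third paragraph: this is precisely where the worst-case-over-all-input-words definition of $\epsilon_\delta^{cl}$ is invoked to absorb the state-dependent weights $\|\ket{\mu_q}\|^2$, i.e.\ where the classical sampling guarantee (such as Lemma~\ref{lemma:sample}) is imported into the quantum setting; everything else — the reduction to $U=I$, the choice of normalized projections, the pure-state distance formula, and the convexity step — is bookkeeping. The one point needing a little care is the degenerate case $\Pi_\tau\ket{\psi'}=0$, where the normalized projection is undefined; choosing an arbitrary unit vector of the nonempty subspace $\mathcal{G}_\tau$ still works, since then $\braket{\psi'|\tilde\phi^\tau}=0$, consistent with $\braket{\psi'|\Pi_\tau|\psi'}=0$ in the distance formula above.
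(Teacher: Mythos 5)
Your proof is correct, and it is essentially the original Bouman--Fehr argument: the paper itself does not prove Theorem~\ref{thm:sample} but imports it from \cite{sampling}, where the ideal states are likewise taken to be the renormalized projections of $\ket{\psi}$ onto the ``good'' subspaces, the block-diagonal structure in $\tau$ reduces the trace distance to an average of pure-state distances $\sqrt{1-\braket{\psi|\Pi_\tau|\psi}}$, and Jensen's inequality together with the worst-case-over-$q$ definition of $\epsilon_\delta^{cl}$ finishes the bound. The only point to keep straight is the normalization you already flag: the claimed $\sqrt{\epsilon_\delta^{cl}}$ holds for the $\tfrac{1}{2}\left|\left|\cdot\right|\right|_1$ trace distance (the convention of \cite{sampling}); with the unnormalized trace norm the right-hand side would be $2\sqrt{\epsilon_\delta^{cl}}$.
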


The above result states that, on average over the choice of subset $\tau$, the real system $\ket{\psi}$ is $\epsilon$-close to an ideal state, where the ideal state is defined to be one where the sampling strategy always works (i.e., where, after sampling, regardless of the subset choice, the state collapses to one which is a superposition of states $\delta$ close to the estimate).  Furthermore, $\epsilon$ can be computed from the classical error probability.  

\section{Main Result}

We are now in a position to state, and prove, our new entropic uncertainty relation.

\begin{theorem}\label{thm:main-qubit}
Let $\hat\epsilon\ge\epsilon > 0$, $a\in\{0,1\}$, $0<\beta<1/2$, and $\rho$ a density operator acting on Hilbert space $\mathcal{H}_2^{\otimes(m+n)}$ with $m \le n$ be given.  Also, let $\PM = \{\kb{\mu_0}, \kb{\mu_1}\}$ and $\PN = \{\kb{\nu_0}, \kb{\nu_1}\}$, be two projective measurements.  If a subset $t$ of size $m$ of $\rho$ is measured using $\PM$ resulting in outcome $q$ we denote by $\rho(t,q)$ to be the post measurement state (this is well defined given $\rho$).
Then it holds that:
\[
Pr\left[\Hmin^{2\epsilon+2\epsilon^\beta}(N)_{\rho(t,q)} + n\Hextd(w_a(q+\delta)) \ge -n\log c)\right] \ge 1-\hat\epsilon^{1-2\beta}
\]
where the probability is over all choice of subsets and resulting measurement outcomes.
Above, $c$ is defined in Equation \ref{eq:uncertain-c} and:
\begin{equation}\label{eq:thm1-delta}
\delta = \sqrt{\frac{(m+n+2)\ln(2/\epsilon^2)}{m(m+n)}}.
\end{equation}
\end{theorem}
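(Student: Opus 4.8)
The plan is to purify $\rho$ and push it through the quantum sampling machinery of Theorem~\ref{thm:sample}, applied with total length $m+n$ and test subset size $k=m$ (so the requirement $k\le(m+n)/2$ is exactly $m\le n$), and with the sampling basis $\mathcal{B}$ taken to be the $\PM$-basis, i.e.\ $U$ is the unitary sending the computational basis of $\mathcal{H}_2$ to $\{\ket{\mu_0},\ket{\mu_1}\}$. First I would fix a purification $\ket{\psi}\in\mathcal{H}_2^{\otimes(m+n)}\otimes\mathcal{H}_E$ of $\rho$ and write $\psi(t,q)$ for the pure post-measurement state obtained after measuring the $m$ coordinates in $t$ with $\PM$ and observing $q$; since conditioning on side information cannot increase smooth min entropy, $\Hmin^{\eta}(N)_{\rho(t,q)}\ge\Hmin^{\eta}(N|E)_{\psi(t,q)}$ for every $\eta\ge 0$, so it suffices to lower-bound the conditional quantity on the purified side. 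The value of $\delta$ in Equation~\ref{eq:thm1-delta} is precisely the one that forces $\epsilon^{cl}_\delta\le\epsilon^2$ in Lemma~\ref{lemma:sample} (with total length $m+n$ and $k=m$); hence $\sqrt{\epsilon^{cl}_\delta}\le\epsilon$, and Theorem~\ref{thm:sample} supplies ideal states $\ket{\phi^t}\in\text{span}(B_{t,a}^\delta)\otimes\mathcal{H}_E$ with $\trd{\frac{1}{T}\sum_t\kb{t}\otimes\kb{\psi}-\frac{1}{T}\sum_t\kb{t}\otimes\kb{\phi^t}}\le\epsilon$.

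Next I would apply the quantum channel that reads the $\ket{t}$ register, measures the corresponding $m$ subsystems in the $\PM$-basis, and copies the outcome into a fresh classical register. Trace distance does not increase under this channel, and because the $(t,q)$-registers are classical the bound decomposes as $\frac{1}{T}\sum_{t,q}\trd{\,p_{t,q}\,\psi(t,q)-\tilde p_{t,q}\,\phi^t(q)\,}\le\epsilon$, where $p_{t,q}$ and $\tilde p_{t,q}$ are the real and ideal probabilities of observing $q$ on test set $t$ and the states on the right are subnormalised. From $\trd{X}\ge|\mathrm{tr}\,X|$ and the triangle inequality one obtains $p_{t,q}\,\trd{\psi(t,q)-\phi^t(q)}\le 2\,\trd{p_{t,q}\,\psi(t,q)-\tilde p_{t,q}\,\phi^t(q)}$; since the real probability of $(t,q)$ is $p_{t,q}/T$, summing shows the expectation of $\trd{\psi(t,q)-\phi^t(q)}$ over the real choice of $(t,q)$ is at most $2\epsilon$, so Markov's inequality yields $\trd{\psi(t,q)-\phi^t(q)}\le 2\epsilon^{\beta}$ except with probability at most $\epsilon^{1-\beta}\le\hat\epsilon^{1-2\beta}$ (the freedom $\hat\epsilon\ge\epsilon$ together with $\beta<1/2$ supplying the slack for this last rounding). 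In this ``good'' event $\phi^t(q)$ lies in the $2\epsilon^{\beta}$-ball around $\psi(t,q)$, so $\Hmin^{2\epsilon+2\epsilon^{\beta}}(N|E)_{\psi(t,q)}\ge\Hmin^{2\epsilon^{\beta}}(N|E)_{\psi(t,q)}\ge\Hmin(N|E)_{\phi^t(q)}$. I expect this conversion of the averaged guarantee of Theorem~\ref{thm:sample} into a per-outcome, high-probability statement with the right smoothing and failure parameters to be the step demanding the most care.

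It remains to bound $\Hmin(N|E)_{\phi^t(q)}$ with certainty. Since $\ket{\phi^t}\in\text{span}(B_{t,a}^\delta)\otimes\mathcal{H}_E$, projecting the coordinates in $t$ onto the observed $\PM$-string collapses it to a state of the exact form $\ket{\phi^t(q)}=\sum_{r\in J}\alpha_r\,\ket{\mu_{r_1}}\otimes\cdots\otimes\ket{\mu_{r_n}}\otimes\ket{\chi_r}_E$ with $J=\{r\in\{0,1\}^n : |w_a(q)-w_a(r)|\le\delta\}$, so $|J|\le\big|\{r : w_a(r)\le w_a(q)+\delta\}\big|\le 2^{\,n\Hextd(w_a(q+\delta))}$, the extended entropy being exactly what makes this count collapse to $2^n$ once $w_a(q)+\delta>1/2$. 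Lemma~\ref{lemma:super-entropy}, with the $\PM$-basis as the orthonormal basis of $\mathcal{H}_2^{\otimes n}$ and $N$ the corresponding product measurement, then gives $\Hmin(N|E)_{\phi^t(q)}\ge\Hmin(N|E)_{\bar\rho}-n\Hextd(w_a(q+\delta))$, where $\bar\rho=\sum_{r\in J}|\alpha_r|^2\,\kb{\mu_{r_1}}\otimes\cdots\otimes\kb{\mu_{r_n}}\otimes\kb{\chi_r}_E$. Finally, appending to $\bar\rho$ a classical register $R$ recording $r$ only lowers the conditional min entropy; conditioned on $R=r$ the $E$-system is the fixed pure state $\ket{\chi_r}$ and hence carries no information, so Equation~\ref{eq:min-ent-classical} (applied with the classical register $RE$) gives $\Hmin(N|E)_{\bar\rho}\ge\inf_r\big(-\log\max_s\prod_{i=1}^{n}|\braket{\nu_{s_i}|\mu_{r_i}}|^2\big)=\inf_r\big(-\sum_{i=1}^{n}\log\max_{s_i}|\braket{\nu_{s_i}|\mu_{r_i}}|^2\big)\ge -n\log c$, using $\max_{s_i}|\braket{\nu_{s_i}|\mu_{r_i}}|^2\le\max_{x,y}|\braket{\mu_x|\nu_y}|^2=c$. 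Chaining the three estimates and rearranging yields, in the good event, $\Hmin^{2\epsilon+2\epsilon^{\beta}}(N)_{\rho(t,q)}+n\Hextd(w_a(q+\delta))\ge -n\log c$, which is the assertion.
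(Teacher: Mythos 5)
Your proposal is correct and follows essentially the same route as the paper: the same application of the sampling theorem with test size $k=m$ and $\delta$ chosen so that $\sqrt{\epsilon^{cl}_\delta}\le\epsilon$, the same certainty bound on the ideal post-measurement state via Lemma \ref{lemma:super-entropy}, the Hamming-ball count $|J|\le 2^{n\Hextd(w_a(q)+\delta)}$ and the overlap bound $p^*\le c^n$, and the same device of appending an orthonormal flag register so that Equation \ref{eq:min-ent-classical} plus strong subadditivity handles the non-orthogonal purifying states. The only genuine deviations are organizational and in the concentration step: you purify up front and invoke data processing ($\Hmin^\eta(N)_{\rho(t,q)}\ge\Hmin^\eta(N|E)_{\psi(t,q)}$) once at the start, where the paper treats the pure case first and appeals to the same reduction at the end; and you convert the averaged closeness $\frac{1}{T}\sum_{t,q}\trd{p_{t,q}\psi(t,q)-\tilde p_{t,q}\phi^t(q)}\le\epsilon$ into a per-outcome statement by the elementary estimate $p_{t,q}\trd{\psi(t,q)-\phi^t(q)}\le 2\trd{p_{t,q}\psi(t,q)-\tilde p_{t,q}\phi^t(q)}$ followed by Markov, whereas the paper bounds both the mean and the variance of $\Delta_{q,t}$ and applies Chebyshev. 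Your version is a bit leaner and in fact gives failure probability $\epsilon^{1-\beta}$ with good-event distance $2\epsilon^{\beta}$, which is stronger than the stated $\hat\epsilon^{1-2\beta}$ and fits inside the $2\epsilon+2\epsilon^{\beta}$ smoothing budget, so the theorem's conclusion follows; the paper's Chebyshev route uses the full budget $2\epsilon+2\epsilon^{\beta}$ and yields exactly the exponent $1-2\beta$ appearing in the statement. Either argument is valid.
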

\begin{proof}
We first consider the case when $\rho$ is pure; that is, $\rho = \kb{\psi}$ for some $\ket{\psi} \in \mathcal{H}_2^{\otimes(m+n)}$.  Then, applying Theorem \ref{thm:sample} to $\rho$, using the sampling strategy described in the previous section for a sample subset size of $m$, it follows that there exists an ``ideal'' state $\sigma$ of the form:
$\sigma = \frac{1}{T}\sum_{t} \kb{t}\otimes \kb{\phi^t},$
where $T$ is the number of possible subsets (i.e., $T = {n+m \choose m}$); the summation is over all possible subsets $t$ of $\{1,\cdots,n+m\}$ which are of size $m$ (we expand the underlying Hilbert space to include this auxiliary subspace $\mathcal{H}_T$ spanned by orthonormal basis $\{\ket{t}\st t \subset\{1,\cdots,n+m\}, |t|=m\}$; and, finally, each $\ket{\phi^t} \in \text{span } (B_{t,a}^\delta)$.  This ideal state satisfies the following:
\[
\left|\left|\sigma - \frac{1}{T}\sum_t\kb{t}\otimes \kb{\psi}\right|\right| \le \sqrt{\epsilon_\delta^{cl}}.
\]
Given $\delta$ as in Equation \ref{eq:thm1-delta}, and also given Lemma \ref{lemma:sample}, it holds that $\sqrt{\epsilon_\delta^{cl}} = \epsilon$.

Consider the following experiment: First, run the sampling strategy, choosing a random subset $t$ (which is chosen by measuring the auxiliary $\mathcal{H}_T$ subspace) and performing a measurement in the $\PM$ basis resulting in outcome $q$ (note that $q$ depends on the subset chosen and the intrinsic randomness of the measurement itself).  Let $\rho(t,q)$ be the post-measurement state if this experiment is performed on the true state $\rho = \kb{\psi}$.  Likewise, let $\sigma(t,q)$ be the post measurement state if this experiment is performed on the ideal state $\sigma$.  Both post-measurement states are well defined given both $t$ and $q$ (though, of course, the post-measurement state may be a superposition, they are, however, exactly defined pure states, conditioning on the outcome of $t$ and $q$).

We first show:
\begin{equation}\label{eq:thm1:rev1}
\Hmin(N)_{\sigma(t,q)} \ge -n\log c - n\Hextd(w_a(q)+\delta).
\end{equation}
That is, with certainty, for any subset $t$ and observed value $q$, Equation \ref{eq:thm1:rev1} holds in the \emph{ideal case}.

Let $t$ be the chosen subset, thus the measurement in basis $\PM$ is performed on the pure state $\ket{\phi^t}$.  Since $\ket{\phi^t} \in \text{span }(B_{t,a}^\delta)$, it follows that the post measurement state, after observing value $q$, collapses to a superposition of the form:
\begin{equation}\label{eq:thm1-super}
\ket{\phi'} = \sum_{i\in J} \alpha_i\ket{\mu_{i_1},\cdots,\mu_{i_n}},
\end{equation}
where $J \subset I = \{i \in \{0,1\}^n \st |w_a(i) - w_a(q)| \le \delta\}$ and normalization requires $\sum_i|\alpha_i|^2 = 1$.  Of course $\sigma(t,q) = \kb{\phi'}$.

Now, consider the mixed state:
\[
\chi = \sum_{i\in J}|\alpha_i|^2\kb{\mu_{i_1}\cdots, \mu_{i_n}}.
\]
By applying Lemma \ref{lemma:super-entropy}, we have:
\begin{equation}\label{eq:thm1-lemma-bound}
\Hmin(N)_{\sigma(t,q)} = \Hmin(N)_{\phi'} \ge \Hmin(N)_\chi - \log|J|.
\end{equation}
We now compute $\Hmin(N)_\chi$.  Let $\chi_N$ be the result of measuring $\chi$ using PM $\PN$.  It is not difficult to see that this state is simply:
\begin{align*}
\chi_N &= \sum_{i\in J}|\alpha_i|^2\left(\sum_{j\in\{0,1\}^n}p(j|i)\kb{\nu_{j_1},\cdots,\nu_{j_n}}\right)\\
&= \sum_{j\in \{0,1\}^n}p(j)\kb{\nu_{j_1},\cdots,\nu_{j_n}},
\end{align*}
where we define $p(j|i) = p(j_1\cdots j_n|i_1\cdots i_n)$ to be the probability of observing $\ket{\nu_{j_1}\cdots \nu_{j_n}}$ if given an input state of $\ket{\mu_{i_1}\cdots\mu_{i_n}}$.  We define $p(j) = \sum_{i \in J}|\alpha_i|^2p(j|i)$.  It is straight-forward to compute $p(j|i)$:
\begin{equation}\label{eq:thm1-pji}
p(j|i) = p(j_1\cdots j_n|i_1\cdots i_n) = \prod_{l=1}^n|\braket{\nu_{j_l}|\mu_{i_l}}|^2
\end{equation}
Since $\chi_N$ is a classical system, we have:
\[
\Hmin(N)_\chi = -\log\max_jp(j) = -\log\max_j\left[\sum_{i\in J}|\alpha_i|^2p(j|i)\right].
\]
Let $p^* = \max_{i,j}p(j|i)$ (where the maximum is over all $i \in J$ and $j \in \{0,1\}^n$).  Then it is clear that:
\[
\max_jp(j) = \max_j\left[\sum_{i\in J}|\alpha_i|^2p(j|i)\right] \le p^*,
\]
(recall that $\sum_i|\alpha_i|^2 = 1$) and thus:
\[
\Hmin(N)_\chi = -\log\max_jp(j) \ge -\log p^*.
\]
Finally, we compute a bound on $p^*$ as:
\[
p^* = \max_{\substack{j\in\{0,1\}^n\\i\in J}} \prod_{l=1}^n |\braket{\nu_{j_l}|\mu_{i_l}}|^2 \le c^n,
\]
where $c = \max_{x,y}|\braket{\nu_x|\mu_y}|^2$.  Thus:
\begin{equation}\label{eq:thm1-ent-bound}
\Hmin(N)_\chi \ge -\log p^* \ge -n\log c.
\end{equation}
It is clear that $J \subset \{i\in\{0,1\}^n \st w_a(i) \le w_a(q) + \delta\}$ and so using the well-known bound on the volume of a Hamming ball we have $|J| \le 2^{n\Hextd(w_a(q) + \delta)}$ (note we are using our ``extended'' version $\Hextd$ here to avoid the issue when $\hw_a(q)+\delta > 1/2$; indeed, if that is the case then $\Hextd(\cdot) = 1$ and so the bound holds trivially), we may combine this with Equations \ref{eq:thm1-lemma-bound} and \ref{eq:thm1-ent-bound} to derive:
\[
\Hmin(N)_{\sigma(t,q)} \ge -n\log c - n\Hextd(w_a(q)+\delta).
\]

Of course, the above analysis only considered the ideal state from which we are guaranteed that the sampling strategy was successful.  We now consider the ``real'' state $\rho = \kb{\psi}$.

Consider the real state $\frac{1}{T}\sum_t\kb{t}\otimes\kb{\psi}$.  The process of choosing a subset $t$, measuring, and observing $q$ (resulting in post-measurement state $\rho(t,q)$) may be described, entirely, by the mixed state:
$\rho_{TQR} = \frac{1}{T}\sum_t \kb{t}\sum_q p(q|t)\kb{q}\otimes\rho(t,q),$
where $p(q|t)$ is the probability of observing outcome $q$ given subset $t$ was sampled; here we use ``$R$'' to denote the ``remainder'' - that is the portion of the state not yet measured.  Likewise, the ideal state, after performing this experiment, may be written as the mixed state:
$\sigma_{TQR} = \frac{1}{T}\sum_t \kb{t}\sum_q \tilde{p}(q|t)\kb{q}\otimes\sigma(t,q).$
Since quantum operations cannot increase trace distance, we have $||\rho_{TQR}-\sigma_{TQR}||\le\epsilon$. By basic properties of trace distance:
\begin{equation}\label{eq:tr-dist-new}
\epsilon \ge \frac{1}{T}\sum_t\sum_q|| p(q|t)\rho(t,q) - \tilde{p}(q|t)\sigma(t,q)||.
\end{equation}
Of course, it holds that $\frac{1}{T}\sum_t\sum_q | p(q|t) - \tilde{p}(q|t)| \le \epsilon$ (this follows by tracing out the unmeasured portion ``$R$'' of $\rho_{TQR}$ and $\sigma_{TQR}$ and again realizing that quantum operations, such as partial trace, do not increase trace distance).  Let $\tilde{p}(q|t) = p(q|t) + \epsilon_{q,t}$ where $\epsilon_{q,t}$ may be positive or negative.  Then, the above inequality of course implies $\frac{1}{T}\sum_t\sum_q|\epsilon_{q,t}| \le \epsilon$.

Returning to Equation \ref{eq:tr-dist-new} we then find:
\begin{align}
\epsilon &\ge \frac{1}{T}\sum_t\sum_q||p(q|t)(\rho(t,q) - \sigma(t,q)) - \epsilon_{q,t}\sigma(t,q)||\notag\\
&\ge \sum_t\sum_q p(q\wedge t)2\cdot\Delta_{q,t} - \epsilon,
\end{align}
where we define $\Delta_{q,t} = \frac{1}{2}||\rho(t,q) - \sigma(t,q)|| \le 1$.  Note that, above, we made use of the reverse triangle inequality and the fact that $||\sigma(t,q)|| = tr\sigma(t,q) = 1$ since $\sigma(t,q)$ is a positive operator of unit trace.  We also used the fact that $p(q\wedge t) = p(q|t)p(t) = p(q|t)\cdot\frac{1}{T}$ (here, $p(q\wedge t)$ is the probability of sampling subset $t$ and observing $q$).  Of course, the above implies:
\begin{equation}
\sum_{t,q}p(q\wedge t)\Delta_{q,t} \le \epsilon.
\end{equation}

Now, let us consider $\Delta_{q,t}$ as a random variable over the choice of all subsets $t$ and measurement outcomes on that subset $q$.  The expected value is easily seen to be $\mathbb{E}(\Delta_{q,t}) = \mu \le \epsilon$.  We also compute the variance $V^2$:
\begin{align*}
V^2 &= \sum_{q,t}p(q \wedge t)\Delta_{q,t}^2 - \mu^2 \le \sum_{q,t}p(q \wedge t)\Delta_{q,t} - \mu^2\notag\\
&= \mu(1-\mu) \le \mu \le \epsilon,
\end{align*}
where, above, we used the fact that $\Delta_{q,t} \le 1$ and so $\Delta_{q,t}^2 \le \Delta_{q,t}$.

Now, by Chebyshev's inequality, we have:
\begin{align}
Pr\left( |\Delta_{q,t} - \mu| \ge \epsilon^{\beta}\right) \le \frac{V^2}{\epsilon^{2\beta}} \le \epsilon^{1-2\beta} \le \hat\epsilon^{1-2\beta},
\end{align}
(the last inequality follows since $\beta < 1/2$); note that this probability is over all subsets $t$ and measurement outcomes $q$.  Thus, except with probability at most $\hat\epsilon^{1-2\beta}$, after choosing $t$ and observing $q$, it holds that $|\Delta_{q,t}-\mu| \le \epsilon^\beta$ which, of course, implies:
\[
\frac{1}{2}||\rho(t,q) - \sigma(t,q)|| = \Delta_{t,q} \le \mu+\epsilon^\beta \le \epsilon+\epsilon^\beta.
\]
Since, in this case we have $\sigma(t,q) \in \Gamma_{2\epsilon+2\epsilon^\beta}(\rho(t,q))$, it holds:
\[
\Hmin^{2\epsilon+2\epsilon^\beta}(N)_{\rho(t,q)} \ge \Hmin(N)_{\sigma(t,q)} \ge -n\log c - \Hextd(w_a(q) + \delta),
\]
completing the proof when the case $\rho$ is pure.

Now consider the case when $\rho$ is not pure.  In this case, let $\ket{\psi}_{HC}$ be a purification of $\rho$, where the $H$ portion is the original $\mathcal{H}_2^{\otimes (m+n)}$ space and the $C$ portion lives in an extra Hilbert space ($\mathcal{H}_C$) needed to purify $\rho$.  As before, using quantum sampling, there exists an ideal state $\sigma$ where, now, each of the $\ket{\phi^t} \in \text{span } (B_{t,a}^\delta)\otimes\mathcal{H}_C$.

Let us consider running the same experiment as before on this ideal state (where, now, the experiment consists only of measuring the $H$ portion, not the $C$ portion).  Let $t$ be the chosen subset and $q$ the observed value.  Then, in the ideal case, the state collapses to a pure state of the form:
\[
\ket{\phi'}_{HC} = \sum_{i\in J}\alpha_i\ket{\mu_{i_1},\cdots, \mu_{i_n}}\otimes \ket{C_i},
\]
where $J$ is defined as before and the states $\ket{C_i}$ are arbitrary (not necessarily orthogonal) states in $\mathcal{H}_C$.  Let $\chi_{HC} = \sum_{i\in J}|\alpha_i|^2\kb{\mu_{i_1}, \cdots, \mu_{i_n}}\otimes\kb{C_i}$.  From Lemma \ref{lemma:super-entropy}, we have:
\[
\Hmin(N|C)_{\phi'} \ge \Hmin(N|C)_\chi - \log|J|.
\]
We add an additional system $I$ spanned by orthonormal basis $\{\ket{I_i}\}_{i\in J}$ and define the following state:
\[
\chi_{HCI} = \sum_{i\in J}|\alpha_i|^2\kb{\mu_{i_1},\cdots, \mu_{i_n}}\otimes\kb{C_i}\otimes\kb{I_i}
\]
Measuring this state using PM $\PN$ yields:
\[
\chi_{NCI} = \sum_{i\in J}|\alpha_i|^2\kb{I_i}\otimes\kb{C_i}\otimes\sum_{j\in\{0,1\}^n}p(j|i)\kb{\nu_{j_1},\cdots, \nu_{j_n}},
\]
where $p(j|i)$ is defined as before in Equation \ref{eq:thm1-pji} (also, note that we permuted the ordering of the sub-spaces above only for clarity).  Define the states $\chi_{N,i}$ as:
\[
\chi_{N,i} = \sum_{j\in\{0,1\}^n}p(j|i)\kb{\nu_{j_1},\cdots,\nu_{j_n}}.
\]
from which we may write $\chi_{NCI} = \sum_{i\in J}|\alpha_i|^2\kb{I_i,C_i}\otimes\chi_{N,i}$.

Thinking of the $CI$ system jointly, the above state is classical on this joint $CI$ system; thus, from Equation \ref{eq:min-ent-classical}, we have:
\begin{align*}
\Hmin(N|CI)_\chi &\ge \inf_{i \in J} \Hmin(N)_{\chi_{N,i}}\\
&= \inf_i(-\log\max_jp(j|i))\\
&\ge -\log p^*\ge -n\log c.
\end{align*}
Finally, from the strong subadditivity of min entropy \cite{renner2005security}:
\begin{align*}
\Hmin(N)_{\phi'} &\ge H(N|C)_{\phi'} \ge \Hmin(N|C)_\chi - \log|J|\\
&\ge \Hmin(N|CI)_\chi - \log|J|\\
&\ge -n\log c - \log|J|\\
&\ge -n\log c - n\Hextd(w_a(q)+\delta),
\end{align*}


The above analysis only utilized the ideal state from which sampling is guaranteed to succeed.  However, the analysis of the real state follows identically as earlier (when we considered an initial pure state), thus completing the proof.
\end{proof}

\section{Applications}
Our Theorem \ref{thm:main-qubit} gives us an interesting entropic uncertainty bound in terms of smooth entropy and also in terms of the success of a classical sampling strategy.  Beyond its independent interest, we show two applications of our new entropic uncertainty result.  First, it gives us a new proof of the Maassen and Uffink entropy relation.  Second, we can apply it to the analysis of source-independent quantum random number generation protocols against adversarial, but memoryless, sources.

\subsection{Application One: Maassen and Uffink Entropic Uncertainty}
As a simple corollary, our Theorem \ref{thm:main-qubit} gives us the usual Maassen and Uffink entropic relation.
\begin{corollary}
Let $\PM$ and $\PN$ be two PMs and $\rho$ a qubit density operator.  Then, except with arbitrarily small probability, it holds that:
\[
H(M)_\rho + H(N)_\rho \ge -\log c.
\]
\end{corollary}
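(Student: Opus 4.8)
The plan is to apply Theorem \ref{thm:main-qubit} to the i.i.d.\ state $\rho^{\otimes(m+n)}$ with $m=n$ and then let $n\to\infty$, using the asymptotic equipartition property (AEP) to convert the smooth min entropy term into a Shannon entropy, and a law of large numbers argument to convert the observed Hamming weight term into $H(\PM)_\rho$. Concretely, first I would fix a small $\hat\epsilon=\epsilon>0$ and some $\beta\in(0,1/2)$ (say $\beta=1/4$), and choose $a\in\{0,1\}$ so that, writing $p_x:=\langle\mu_x|\rho|\mu_x\rangle$, one has $p_a\ge 1/2$; this guarantees $p_{1-a}\le 1/2$, so that $\Hextd(p_{1-a})=H(p_{1-a},p_a)=H(\PM)_\rho$ lies in the regime where $\Hextd$ is the genuine binary entropy.

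The observation that makes the i.i.d.\ case tractable is that, since $\rho^{\otimes(m+n)}$ is a product state across the $m+n$ subsystems, measuring any size-$m$ subset $t$ in the $\PM$ basis leaves the remaining $n$ subsystems in the state $\rho^{\otimes n}$ regardless of $t$ and of the outcome $q$. Hence $\rho(t,q)=\rho^{\otimes n}$ and the quantity $h_n:=\Hmin^{2\epsilon+2\epsilon^\beta}(\PN)_{\rho(t,q)}$ is a fixed number independent of $(t,q)$. Next I would combine two concentration facts: Theorem \ref{thm:main-qubit} says that, except with probability at most $\hat\epsilon^{1-2\beta}$, $h_n\ge -n\log c - n\Hextd(\hw_a(q)+\delta)$ with $\delta=\delta(n)\to 0$; and since $q$ is the outcome of measuring $n$ independent copies of $\rho$ in basis $\PM$, a Hoeffding/Chebyshev bound gives $\hw_a(q)\le p_{1-a}+\delta'(n)$ except with probability $\gamma_n\to 0$, where $\delta'(n)\to 0$. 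Intersecting the two good events (which has positive probability once $n$ is large and $\hat\epsilon<1$), and using that $\Hextd$ is nondecreasing, one obtains the purely numerical inequality $\tfrac1n h_n\ge -\log c - \Hextd\!\big(p_{1-a}+\delta(n)+\delta'(n)\big)$.

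Finally I would let $n\to\infty$. By the AEP for smooth min entropy \cite{renner2005security}, applied to the i.i.d.\ classical distribution $P_X^{\otimes n}$ obtained by measuring $\rho^{\otimes n}$ with $\PN$ (here $X$ is the single-copy $\PN$-outcome, with $H(X)=H(\PN)_\rho$), one has $\tfrac1n h_n=\tfrac1n\Hmin^{2\epsilon+2\epsilon^\beta}(\PN)_{\rho^{\otimes n}}\to H(\PN)_\rho$, the smoothing parameter being a fixed constant. Since $\delta(n),\delta'(n)\to0$ and $\Hextd$ is continuous with $\Hextd(p_{1-a})=H(\PM)_\rho$, the right-hand side converges to $-\log c - H(\PM)_\rho$, yielding $H(\PM)_\rho+H(\PN)_\rho\ge -\log c$. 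For the finite-$n$ statement matching the phrasing ``except with arbitrarily small probability,'' one simply stops short of the limit: for any $\eta>0$ there are $n$ and $\hat\epsilon$ such that, except with probability $<\eta$, $H(\PM)_\rho+H(\PN)_\rho\ge-\log c-\eta$.

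The step I expect to be the main obstacle is the careful invocation of the AEP: one must be sure that the paper's notation $\Hmin^\epsilon(\PN)_\rho$ refers to smoothing the \emph{post-measurement} classical state, so that the relevant statement is the elementary classical AEP for $P_X^{\otimes n}$ (which gives the Shannon entropy $H(\PN)_\rho$ of the measurement outcome), rather than the quantum AEP for $\rho^{\otimes n}$ (which would give the generally smaller von Neumann entropy $S(\rho)$ and fail to reproduce Maassen--Uffink). Everything else is routine: bookkeeping with the two concentration bounds, the monotonicity and continuity of $\Hextd$, and the choice of $a$ keeping $p_{1-a}\le 1/2$.
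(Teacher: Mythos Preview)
Your proposal is correct and follows essentially the same route as the paper: apply Theorem~\ref{thm:main-qubit} to $\rho^{\otimes 2n}$ with $m=n$, use that the post-measurement remainder is $\rho^{\otimes n}$ regardless of $(t,q)$, invoke the AEP for the smooth min-entropy term and the law of large numbers for $\hw_a(q)$, and choose $a$ so that $p_{1-a}\le 1/2$ to land in the regime where $\Hextd$ is the genuine binary entropy. Your handling is in fact slightly more careful than the paper's---you explicitly intersect the two good events and exploit that $h_n$ is deterministic, whereas the paper simply writes $\lim_{\epsilon\to 0}\lim_{n\to\infty}$ and appeals to the law of large numbers; also, you keep $\epsilon$ fixed and use the AEP at a fixed smoothing parameter, while the paper additionally sends $\epsilon\to 0$, but both versions yield $H(\PN)_\rho$.
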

\begin{proof}
Let $\rho$ be a density operator on $\mathcal{H}_2$ and consider the state $\rho' = \rho^{\otimes 2n}$.  Let $a = \max_xtr(\kb{\mu_x}\cdot\rho)$; in particular, if measuring $\rho$ using $\PM$ the probability of observing $\ket{\mu_a}$ is no less than $1/2$.  Note that this ``$a$'' need not be known to users making the measurement, however it clearly exists.  Since $\rho'$ is i.i.d., for any subset $t$ of size $n$ and any measurement outcome $q$ on that subset, the post-measurement state is simply $\rho^{\otimes n}$.

Fix $\hat\epsilon > 0$ and $0<\beta<1/2$.  Then, \emph{for any} $n$ and $\epsilon \le \hat\epsilon$, Theorem \ref{thm:main-qubit}  implies that, except with probability at most $\hat\epsilon^{1-2\beta}$, the following inequality holds:
\begin{equation}\label{eq:cor1-bound1}
\frac{1}{n}\Hmin^{2\epsilon+2\epsilon^\beta}(N)_{\rho^{\otimes n}} + \Hextd(w_a(q)+\delta) \ge -\log c,
\end{equation}
where $q$ is the observed value after measuring using $\PM$ and:
\[
\delta = \sqrt{\frac{(n+1)\ln(2/\epsilon^2)}{n^2}}.
\]
(We used $m=n$ when applying the theorem.)
By the asymptotic equipartition property \cite{aep}, we have $\lim_{\epsilon\rightarrow 0}\lim_{n\rightarrow\infty}\frac{1}{n}\Hmin^{2\epsilon+2\epsilon^\beta}(N)_{\rho^{\otimes n}} = H(N)_\rho$.  By the law of large numbers, we have $\lim_{n\rightarrow \infty}w_a(q) = p_{1-a}$.  Note that by definition of $a$, we have $p_{1-a} \le 1/2$ thus allowing us to replace $\Hextd(\cdot)$ with $H(p_{1-a}, p_a) = H(M)_\rho$. Finally, $\delta \rightarrow 0$ as $n\rightarrow \infty$.  Given fixed $\hat\epsilon$ the above holds; of course $\hat\epsilon$ may be made arbitrarily small, thus yielding the result.
\end{proof}

\subsection{Second Application: Random Number Generation}

We show in this section an interesting application of our new entropic uncertainty relation derived in Theorem \ref{thm:main-qubit} to quantum random number generation in the source independent model.  The goal of a quantum random number generator (QRNG) is to utilize quantum physical properties (e.g., random measurement outcomes) to produce true randomness useful for numerous other tasks (including for cryptography).  Several security models exist ranging from the very weak fully-trusted scenario to the very strong device independent (DI) model \cite{qrng-di1,qrng-di2} (which, though having strong security guarantees, is slow to implement in practice \cite{di-exp,di-exp2}).  In between is the \emph{source independent} (SI) model whereby only the source is untrusted, but the measurement devices are characterized \cite{si-qrng1,qrng-non-iid,qrng-high-speed-cv,qrng-high-speed-cv2}.  See \cite{qrng-survey} for a general survey of QRNGs and their security models.

We show that our new entropic uncertainty relation, proven in Theorem \ref{thm:main-qubit}, has applications to this cryptographic protocol.  This is only preliminary work to show the potential usefulness of quantum sampling applied to broader quantum information science and cryptography and, so, the model we consider is a memory-less adversarial source.  This source, controlled by an adversary, prepares a general $N$ qubit state and sends it to user $A$.  An honest source should prepare the state $\ket{+}^{\otimes N}$ but an adversarial source may prepare anything - we do not require any assumptions on the overall structure of this state beyond that it consists of $N$ qubits and it may even be non -i.i.d.  This user chooses a random sample of size $m$ (this requires some initial private randomness, thus the QRNG must actually extend this initial seed randomness and it's usage must be taken into account) and measures in a test basis (for our sake, we use the $X=\{\ket{+}, \ket{-}\}$ basis) observing outcome $q$ (as a bitstring - if there is no noise and the source is honest, $q = 0^m$).  The remaining $n = N-m$ qubits are measured in the $Z = \{\ket{0}, \ket{1}\}$ basis.  Following this, privacy amplification may be run to distill an $\ell$-bit random string.  Using privacy amplification (see Equation \ref{eq:PA-smooth} but the $E$ system is trivial here as we consider a memory-less adversary), we have:
\begin{equation}
\tdl\rho_{R} - I_R/2^\ell \tdr \le 2\epsilon' + 2^{-\frac{1}{2}(\Hmin^{\epsilon'}(A) - \ell)} = \epsilon_{PA}.
\end{equation}
Above, $\rho_{A}$ is the state of the $n$ measurement results in the $Z$ basis \emph{before} privacy amplification and $\rho_R$ is the state after privacy amplification (transforming the $A$ register of size $n$ to the $R$ register of size $\ell$).
Thus, if we want the trace distance to be no greater than a given $\epsilon_{PA}$ (giving us an $\epsilon_{PA}$-random string), we have:
\[
\ell = \Hmin^{\epsilon'}(A|E)_\rho - 2\log\left(\frac{1}{\epsilon_{PA} - 2\epsilon'}\right).
\]
(Note we require $\epsilon_{PA} > 2\epsilon'$, where $\epsilon'$ is whatever smoothening parameter is used.)
Interestingly, while the choice of the random hash function used for privacy amplification must be random, it was proven in \cite{QRNG-1PA} that once chosen it can be fixed and so we do not need to use additional randomness to choose a hash function (it could be chosen randomly once and then hard-coded into $A$'s device - see \cite{QRNG-1PA} for more details).

If the adversary prepares $N$ qubit states, unentangled with any quantum memory, then we may immediately use our Theorem \ref{thm:main-qubit} to compute $\ell$.  Indeed, let $\epsilon > 0$ and $\beta \in (0,1/2)$ be given.  Let $\epsilon_{PA} = 5\epsilon + 4\epsilon^\beta$.  Then, using the $Z$ and $X$ basis, where $c = 1/2$, we have, except with a failure probability of $\epsilon^{1-2\beta}$, after privacy amplification the size of the final random string is:
\[
\ell_{QRNG} = n(1 - \Hextd(w(q) + \delta))) - \log\frac{1}{\epsilon}.
\]
where $q$ is the observed bit string on the $m$ test qubits (measured in the $X$ basis), and where $\delta$ is given in Equation \ref{eq:thm1-delta}.  Note that the choice of $\beta$ factors into $\epsilon_{PA}$ (which determines how close the output is to uniform randomness) and the failure probability of the entire protocol.  Of course both terms may be made arbitrarily small, but note that, for \emph{fixed} $\epsilon$, as $\beta$ decreases, the failure probability decreases, while $\epsilon_{PA}$ increases.  This choice of $\beta$ is something users may optimize over.

Of course, we must also take into account the randomness used to choose a random subset of size $m$.  This requires $\log{N\choose m}$ bits.  Thus, the total size of the final random string, after sacrificing these initial seed bits, is:
\begin{equation}
\ell_{QRNG} = n(1 - \Hextd(w(q) + \delta))) - \log\frac{1}{\epsilon} - \log{N \choose m}.
\end{equation}
The random bit generation rate is simply $\ell_{QRNG} / N = \ell_{QRNG} / (n+m)$.

We set $\epsilon = 10^{-36}$ and $\beta = .33$ (we did not optimize $\beta$ and so a better choice can lead to more optimistic settings for our bound).  With these settings, the protocol fails with probability less than $\epsilon^{1-2\beta} = 10^{-12}$ while $\epsilon_{PA} < 5\times 10^{-12}$.  A graph of the random generation rate of this protocol using our new entropic uncertainty bound is shown in Figure \ref{fig:rate}.

Note that, in the original quantum sampling paper \cite{sampling}, their method was applied to the security proof of BB84 \cite{QKD-BB84}.  However, their proof relied on many internal symmetries within BB84 which we did not need for our proof here - instead, our entropic uncertainty bound applied immediately to the QRNG protocol without requiring any additional reductions.  We believe that with further refinements to our method, along with an extension to adversaries with quantum memories, this technique of utilizing quantum sampling, \emph{augmented with the analysis framework we introduced in our proof of Theorem \ref{thm:main-qubit}}, can lead to a powerful mechanism for proving security of cryptographic protocols in finite key settings.


\begin{figure}
  \centering
  \includegraphics[width=225pt]{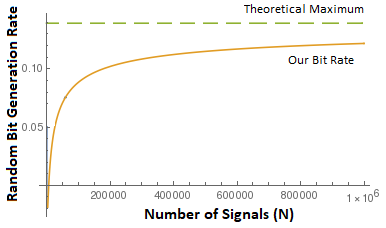}
  \caption{Showing the random bit generation rate we derived using our entropic uncertainty relation (Solid line), namely $\ell_{QRNG} / N$ as the number of signals $N = n+m$ increases.  We assume a high source noise level of $20\%$ here (namely, $w(q) = .2$).  We use $m = 0.07n$ in this graph and $\beta = .33$.  Neither settings were optimized, so the result could potentially be improved further.  Also showing the theoretical, asymptotic upper bound (dashed line) for this same noise level.  We note that, as $N$ increases beyond the plotted $10^6$, our lower-bound numerically tends to approach the theoretical maximum.}\label{fig:rate}
\end{figure}

\section{Closing Remarks}
In this paper we showed an interesting connection between quantum sampling and quantum uncertainty.  We used the quantum sampling technique introduced in \cite{sampling} to derive and prove a new entropic uncertainty relation based on smooth min entropy, the Shannon entropy of an observed outcome, and the probability of failure of a classical sampling strategy.  Our result is applicable to arbitrary, finite, states that are not necessarily i.i.d.  From this we were able to derive an alternative, and simple, proof for the Maassen and Uffink bound first proven in \cite{MU-bound}.  We also showed how our result can be used to derive bit generation rates for quantum random number generators where the source is controlled by a memory-less adversary.  To our knowledge, this is the first time quantum sampling has been extended to general quantum information theory and our method of proving Theorem \ref{thm:main-qubit} may hold broad application in future research.  Note that, though we only proved the qubit case of the Maassen and Uffink entropic uncertainty relation, we strongly suspect this technique can be used to prove the higher dimensional case also. It would also be interesting to see if quantum sampling can yield a simple proof for the conditional version of the uncertainty relation, namely $H(M|B) + H(N|E) \ge -\log c$ \cite{survey,smooth-uncertainty}.  We are currently investigating this, also, as future work.  Finally, investigating our method's application to other cryptographic protocols is another interesting line of investigation.
$ $\newline\newline
\textbf{Acknowledgments:} The author would like to thank the anonymous reviewers for their comments which have greatly improved the presentation of this paper.  The author would also like to acknowledge support from National Science Foundation grant number 1812070.


\end{document}